\newtheorem{assumption}{Assumption}
\newtheorem{proposition}{Proposition}
\algrenewcommand{\algorithmicreturn}{\textbf{Return}}
\algnewcommand{\Returnx}[1]{\Statex \algorithmicreturn\ #1} 
\newtheorem{remark}{Remark}%
\newcommand{\indfun}{\mathds{1}}
\definecolor{colour3}{RGB}{178,55,250} 
\newcounter{noteMCctr} \setcounter{noteMCctr}{1}
\definecolor{darkpastelgreen}{rgb}{0.01, 0.75, 0.24}
\begin{document}

\title{Factor-Driven Network Informed Restricted Vector Autoregression}

%

\author{Brendan Martin}
\affiliation{%
  \institution{Imperial College London}
  \country{United Kingdom}
}
\email{b.martin22@imperial.ac.uk}

\author{Mihai Cucuringu}
\affiliation{%
  \institution{University of California, Los Angeles}
  \country{United States}
}

\additionalaffiliation{%
  \department{Department of Statistics}
  \institution{University of Oxford}
  \city{Oxford}
  \country{United Kingdom}
}
\additionalaffiliation{%
  \department{Oxford-Man Institute of Quantitative Finance}
  \institution{University of Oxford}
  \city{Oxford}
  \country{United Kingdom}
}
\email{mihai@math.ucla.edu}


\author{Alessandra Luati}
\affiliation{%
  \institution{Imperial College London}
  \country{United Kingdom}
}
\additionalaffiliation{%
  \department{Department of Statistical Sciences}
  \institution{University of Bologna}
  \city{Bologna}
  \country{Italy}
}
\email{a.luati@imperial.ac.uk}


\author{Francesco Sanna Passino}
\affiliation{%
  \institution{Imperial College London}
  \country{United Kingdom}
}
\email{f.sannapassino@imperial.ac.uk}



\begin{abstract}
 High-dimensional financial time series often exhibit complex dependence relations driven by both common market structures and latent connections among assets. To capture these characteristics, this paper proposes Factor-Driven Network Informed Restricted Vector Autoregression (FNIRVAR), a model for the common and idiosyncratic components of high-dimensional time series with an underlying unobserved network structure. The common component is modelled by a static factor model, which allows for strong cross-sectional dependence, whilst a network vector autoregressive process captures the residual co-movements due to the idiosyncratic component. An assortative stochastic block model underlies the network VAR, leading to groups of highly co-moving variables in the idiosyncratic component. For estimation, a two-step procedure is proposed, whereby the static factors are estimated via principal component analysis, followed by estimation of the network VAR parameters. 
The method is demonstrated in financial applications to daily returns, intraday returns, and FRED-MD macroeconomic variables. In all 
cases, the proposed method outperforms a static factor model, as well as a static factor plus LASSO-estimated sparse VAR model, in terms of forecasting and financial performance metrics. 
\end{abstract}


\keywords{Factor models, financial returns prediction, network time series, stochastic block model.}


\maketitle

\section{Introduction} \label{sec:intro}

Datasets in which the number of variables is comparable to or greater than the number of observations are ubiquitous in finance and economics. In a high-dimensional time series setting, correctly characterizing the dependence among variables is crucial for interpreting the relationships between series as well as for prediction tasks. Two main techniques for imposing low-dimensional structure on high-dimensional time series have emerged in the literature \citep[][]{fan2023bridging,giannone2021economic,chernozhukov2017lava}. First, factor models are used to capture co-movements due to shocks common across all series \citep[see][and references therein]{forni2009opening}. Second, sparse models, whereby each series depends only on a small subset of the other series, have proven effective tools for parameter estimation and forecasting \citep[see, for example,][]{fan2020factor}. 
In fact, any second-order stationary process $\{X_{i,t} : i \in \mathbb{N}, t \in \mathbb{Z}\}$ can be decomposed into a so-called common or systematic component, driven by a finite number of (possibly dynamic) factors, and an orthogonal idiosyncratic component \citep{barigozzi2024dynamic}. As such, combining dense factor models with sparse methods has been widely used 
to capture both sources of co-movement \citep{fan2023bridging,miao2023high,krampe2025factor,barigozzi2024fnets,chen2023community}. Dense, here, refers to any model in which all 
explanatory variables play a role in prediction, even though the impact of some of them may be small. 

In this work, we propose a novel factor-based framework for modelling multivariate time series with an underlying network structure, called Factor-Driven Network Informed Restricted Vector Autoregression (FNIRVAR). The proposed approach combines a static factor model \citep{stock2002forecasting,bai2003inferential} with a network-based vector autoregressive (VAR) model for the idiosyncratic component. Network VAR models,  introduced by \citet{zhu2017network} and \citet{knight2016modelling}, treat each univariate time series as being observed on a vertex of a graph, with the VAR coefficient matrix encoding the edge structure of the graph. The network structure can be thought of as a variable selection mechanism, with only those vertices which are connected having non-zero VAR coefficients. This regularisation allows for parameter estimation in high dimensions. The majority of existing works on network VAR methods assume the network to be observable \citep[see, for example,][]{knight2020generalized,barigozzi2025factor,yin2023general}. In financial and economic applications, however, the assumption of an observed underlying network is often unrealistic. A recent framework proposed by \citet{martin2024nirvar}, called Network Informed Restricted Vector Autoregression (NIRVAR), allows for an unobserved network, instead using a data-driven method to estimate the network. We therefore adopt the NIRVAR methodology to model the idiosyncratic component. Since we augment the NIRVAR model by including unobserved factors, we denote this factor driven model as FNIRVAR.

For the estimation of FNIRVAR, we propose a two-step procedure consisting in factor estimation based on principal component analysis (PCA), followed by NIRVAR estimation on the remaining component. 
The NIRVAR estimation framework entails finding groups of highly co-moving time series via spectral embedding and clustering, and subsequently estimating the VAR coefficients for series in the same group only \citep{martin2024nirvar}. All VAR parameters for variables belonging to different groups are restricted to zero. In a financial prediction context, this grouping method could be construed as a generalisation of a pairs trading strategy \citep[see][for a discussion of pairs trading]{vidyamurthy2004pairs}: instead of building a strategy based upon two highly correlated assets, we compare groups of many highly correlated assets. In particular, the future returns of a given asset are forecasted as a weighted sum of the assets within its group (after controlling for common factors).
In our proposed estimation procedure, we follow the approach of \citet{fan2023bridging} by assuming a large eigengap between the covariance matrix of the common component and that of the idiosyncratic component. An alternative approach is the joint optimisation method of \citet{agarwal2012noisy}. This, however, requires a sparsity assumption on the idiosyncratic component which we do not impose. In Section \ref{sec:sim}, we conduct a simulation study showing that assuming $r$ strong factors in the FNIRVAR model leads to the largest (in magnitude) $r$ eigenvalues exploding with dimension $N$, whereas the idiosyncratic eigenvalues are bounded. This justifies our proposed two-step estimation procedure. 

\subsection{Main contributions}
The primary contribution of this paper is the development of FNIRVAR, a factor driven NIRVAR model, and a related estimation framework. To the best of our knowledge, this is the first study that combines a dense factor model with a sparse network time series model without assuming the network to be observed. The benefits of this approach are three-fold: \textit{(i)} accounting for unobserved common factors mitigates endogeneity issues that arise if the common component is not modelled; \textit{(ii)} the NIRVAR estimation method captures residual co-movements between groups of series which are not picked up by the estimated factors, leading to enhanced forecasting performance; \textit{(iii)} the estimated factors can lead to meaningful interpretations about the forces driving the common component whilst estimated NIRVAR groups can shed light on the network driving the idiosyncratic component. The usefulness of the proposed methodology is demonstrated in an extensive application to financial returns prediction, as well as in a prediction task on the macroeconomic FRED-MD dataset \citep[][]{mccracken2016fred}.

\subsection{Outline}
The rest of this paper is organised as follows: the remainder of Section~\ref{sec:intro} discusses related literature and establishes the notation used in the paper. 
Next, Section \ref{sec:model} defines the FNIRVAR model and gives the conditions under which the model is stationary. Section \ref{sec:estimation} states the strong factor assumption with a large eigengap, and details the two-step estimation procedure. Simulation studies analysing the predictive performance of FNIRVAR under various data generating processes, and the effect of violating the large eigengap assumption on FNIRVAR estimation are illustrated in Section \ref{sec:sim}. Section \ref{sec:applications} contains applications to financial returns prediction and a macroeconomic application to forecasting US industrial production using FRED-MD. 
All code is available in the GitHub repository \ifthenelse{\boolean{anonym}}{{\color{red}ANONYMISED LINK}}{\href{https://github.com/bmartin9/fnirvar}{bmartin9/fnirvar}}. 

\subsection{Related literature} 
This work contributes to the growing literature combining dense factor models with sparse models \citep[see, for example,][]{fan2023bridging,krampe2025factor,miao2023high,barigozzi2024fnets,chen2023community} and references therein. Factor models have a long history in econometrics starting with the dynamic exact factor models of \citet{geweke1977dynamic} and \citet{sargent1977business}. Here, exact refers to the assumption of an idiosyncratic error term with diagonal covariance, whereas approximate factor models have cross-sectional dependence in the idiosyncratic term. In particular, the factor model structure considered in this work is the static approximate factor model, introduced by \citet{chamberlain1983funds} and \citet{chamberlain1983arbitrage}, and further developed by \citet{stock2002forecasting}, \citet{stock2002macroeconomic}, \citet{bai2003inferential}, and \citet{bai2002determining}. A wider class that encompasses all other factor models is the General Dynamic Factor Model proposed by \citet{forni2000generalized} and \citet{forni2001generalized}. We note that although we impose dynamics on the factors in this work, we require a finite number of lags, and any dynamic factor model with a finite number of lags can be written as a static factor model \citep[][]{barigozzi2024dynamic}. For this reason, we consider our contribution as a static approximate factor model.

As an alternative to factor models, regularised estimation procedures such as the Least Absolute Shrinkage and Selection Operator \citep[LASSO;][]{tibshirani1996regression} impose sparsity assumptions to achieve dimensionality reduction. In the context of VAR modelling, there is a large literature on regularised estimation of sparse VAR models \citep[see, for example][]{basu2015regularized,han2015direct,knight2016modelling,adamek2023lasso,nicholson2020high}. 

The approaches of factor and sparse models have been first combined in the seminal paper of \citet{fan2023bridging}. Next, both \citet{miao2023high} and \citet{krampe2025factor} propose $\ell_{1}$-type regularised estimators for a factor-augmented sparse VAR model. Similar to our framework, \citet{krampe2025factor} utilise a two-step estimation procedure in which PCA is used in a first step to estimate the factors and the idiosyncratic component is estimated in a second step. In contrast, \citet{miao2023high} employ a joint optimisation procedure to estimate the common and idiosyncratic components. 

Additionally, there are a number of papers that develop network-based models for the idiosyncratic component, whilst including a factor model to account for the common component. \citet{chen2023community} extends the network autoregressive model of \citet{zhu2017network} to allow group-wise network effects as well as static factors. The major difference between our work and that of \citet{chen2023community} is that we do not assume that the network is observed. Furthermore, their model is based on an expected adjacency matrix, whereas the model we consider is based on a realised adjacency matrix. \citet{barigozzi2024fnets} develop a dynamic factor plus sparse VAR model and propose an $\ell_{1}$-regularised Yule-Walker estimator for the latent VAR process. The variables selected by their $\ell_{1}$-regularised method are then used to construct a network. In contrast, we construct the network as a first step, and use it to inform the estimation of the latent VAR process.

Finally, we mention the growing use of modern machine learning techniques for time series prediction tasks. \citet{gu2020empirical} provide a comprehensive overview and comparison of machine learning methods for empirical asset pricing, and suggest that the inclusion of nonlinear predictor interactions can lead to predictive gains missed by more traditional regression-based strategies. 

\subsection{Notation}
For an integer $k$, we let $[k]$ denote the set $\{1,\dots,k\}$. The $n \times n$ identity matrix is written $I_{n}$, and the indicator function is $\indfun\{\mathcal B\} = 1$ if the event $\mathcal{B}$ occurs and $0$ otherwise. For vectors, $\bm{v}_{1},\dots,\bm{v}_{n} \in \mathbb{R}^{p}$, we let $M = (\bm{v}_{1},\dots,\bm{v}_{n}) \in \mathbb{R}^{p \times n}$ be the matrix whose columns are given by $\bm{v}_{1},\dots,\bm{v}_{n}$. The transpose of $M$ is denoted $M^{\prime}$. We write $M_{i,:}$ and $M_{:,j}$ to denote the $i$-th row and $j$-th column of $M$ considered as vectors, respectively. We use $\odot$ to denote the Hadamard product between two matrices of the same dimensions. For matrices $M_{1},\dots,M_{r}$, $(M_{1}|\cdots|M_{r})$ and $(M_{1};\dots;M_{r})$ denote the column-wise and row-wise concatenation of the matrices, respectively. We write $\mathcal{S}_{K-1} = \{x \in \mathbb{R}^{K} : \sum_{k=1}^{K}x_{k} = 1, x_{k} > 0, k \in [K]\}$ to denote the interior of the unit simplex in $K$ dimensions.

\section{Model}\label{sec:model}

In this section, we describe the FNIRVAR model, the main contribution of this work.
In order to define FNIRVAR, we first introduce the stochastic block model, which is the random graph used to model the idiosyncratic component. For a review of stochastic block models, see \citet{lee2019review}. Let $\mathcal{G} = (\mathcal{V}, \mathcal{E})$ be a random graph with vertex set $\mathcal{V} = [N]$ and where $\mathcal{E} \subseteq \mathcal{V} \times \mathcal{V}$ is a set of random edges. In a stochastic block model, each vertex belongs to one of $K$ communities called blocks and the probability of an edge forming between two vertices depends only on their block memberships. 

\begin{definition}\label{def:SBM}(Stochastic block model).
    Let $B \in \mathbb{R}^{K \times K}$ be a positive semi-definite matrix of block connection probabilities where $K \in \mathbb{N}$ is the number of blocks. Let $\pi = (\pi_{1},\dots,\pi_{N})^{\prime} \in \mathcal{S}_{K-1}$ represent the prior probabilities of each node belonging to the $k$-th community for $k \in [K]$. We say $A$ is the adjacency matrix for a stochastic block model, written $A\sim\mathrm{SBM}(B,\pi)$, if the block assignment map $z :[N] \to [K]$ satisfies $\mathbb{P}\{z(i) = k\} = \pi_{k}$ for each $k \in [K]$ and, conditional on $z$, the entries $A_{ij}$ are independently generated as 
    \begin{align}
        A_{ij} = \begin{cases}
            \text{Bernoulli}\left(B_{z(i),z(j)}\right) \quad &\text{for} \quad i \neq j \\
            1  \quad &\text{for} \quad i = j.
        \end{cases}
    \end{align}
\end{definition}

\begin{remark}
    The condition that $B$ is positive semi-definite in Definition \ref{def:SBM} corresponds to an assortative graph, referring to the tendency of similar vertices to be connected to each other \citep[][]{newman2003mixing}. In a network time series context, this amounts to assuming greater co-movement between series within the same community than between communities. 
\end{remark}

We now define the FNIRVAR model in which the common component is modelled as a (restricted) dynamic factor model and the idiosyncratic component is modelled as a restricted VAR process whereby the idiosyncratic VAR restrictions are determined by the adjacency matrix of a SBM. 

\begin{definition}\label{def:fnirvar}(FNIRVAR model).
    Let $\{ \boldsymbol{X}_t \}_{t \in \mathbb{Z}}$ denote a zero mean, second order stationary stochastic process where $\boldsymbol{X}_t = (X_{1,t}, \cdots, X_{N,t})' \in \mathbb{R}^N$. The FNIRVAR model is defined as:
    \begin{align}
    \boldsymbol{X}_t &= \Lambda \boldsymbol{F}_t + \boldsymbol{\xi}_t, \\
    \boldsymbol{F}_t &= \sum_{k=1}^{l_F} P_k \boldsymbol{F}_{t-k} + 
    \Gamma_u^{1/2}\boldsymbol{u}_t, \quad \boldsymbol{u}_t \sim \mathcal{N}(0, I_q), \\
    \xi_{i,t} &= \sum_{j=1}^{N} A_{ij} \tilde{\Phi}_{ij} \xi_{j,t-1} + \epsilon_{i,t}, \quad \boldsymbol{\epsilon}_t \sim \mathcal{N}(0, \Gamma_\epsilon), \\
    A &\sim \text{SBM}(B, \pi), \label{eq:model-def}
\end{align} 
where $\Lambda$ is a $N \times r$ matrix of loadings, $\boldsymbol{F}_t$ is an $r$-dimensional vector of factors, where $r\in\mathbb{N}$ is the number of factors, $P_k$ are $r \times r$ coefficient matrices of the factor VAR process with $l_F\in\mathbb{N}$ lags, $\Gamma_u$ is an $r \times q$ matrix, and $\boldsymbol{u}_t$ are common shocks. We further define $P \coloneqq (P_1; \ldots; P_{l_F}) \in \mathbb{R}^{rl_F \times r}$. The common component is written as $\bm{\chi}_{t} = \Lambda \boldsymbol{F}_{t}$, whereas the idiosyncratic component is $\boldsymbol{\xi}_{t}$. The parameters $\tilde{\Phi}_{ij}$ are taken as fixed. We also define $\Phi \coloneqq A \odot \tilde{\Phi}$. 
\end{definition}


The NIRVAR idiosyncratic model can be interpreted as a restricted VAR model in which the restrictions on $\Phi$ are determined by the adjacency matrix of an assortative SBM. As such, $\Phi$ will have a block diagonal structure with a greater number of non-zero entries within each block than between blocks. The sparsity level of $\Phi$ is determined by the block probability matrix $B$. 

    For ease of presentation, we consider the standard unweighted SBM and post multiply each realised edge $A_{ij}$ by a fixed weight $\tilde{\Phi}_{ij}$. Equivalently, one could model $\Phi$ as being the adjaceny matrix of a weighted SBM \citep[for a definition, see][]{gallagher2024spectral}. In this formulation, $\Phi_{ij} \mid z(i),z(j) 
    \sim H\{z(i),z(j)\}$ for some family of distributions $H$ that depends only on the block assignments $z(i)$, and not on the individual vertices.



It is worth pointing out that, in expectation, the idiosyncratic NIRVAR model can itself be viewed as a factor model. Let 
\begin{equation} 
\mathbb{E}\{\Phi\mid z(i) = k,z(j)=l\} = \tilde{B}_{kl},
\end{equation}
where $\tilde{B} \in \mathbb{R}^{K \times K}$ is the block mean matrix. Then the expected value of the observed weighted adjacency matrix can be written as $\mathbb{E}(\Phi) = Z\tilde{B}Z^{\prime}$, with $Z \in \{0,1\}^{N \times K}$ a binary matrix whose entries satisfy $Z_{ik} = 1$ when vertex $i$ belongs to community $k \in [K]$ and zero otherwise. 
Replacing $\Phi$ with $\mathbb{E}(\Phi)$, the NIRVAR model yields
\begin{align}
    \bm{\xi}_{t} = Z\tilde{B}Z^{\prime} \bm{\xi}_{t-1} + \boldsymbol{\epsilon}_t = \Lambda_{\xi} \bm{F}^{(\xi)}_{t} + \boldsymbol{\epsilon}_{t},
\end{align}
where $\bm{F}^{(\xi)}_{t} \coloneqq Z^{\prime} \bm{\xi}_{t-1}$ is a $K$-dimensional vector of factors encoding the aggregated response in each community and $\Lambda_{\xi} \coloneqq Z\tilde{B}$ are the loadings of each panel component onto these ``community factors''. This relation between the expected adjacency matrix and community factors is also discussed in \citet[][]{chen2023community}.

Before describing the FNIRVAR estimation procedure, we analyse the conditions under which the model is stationary.
 
\begin{proposition}[Stationarity]
\label{prop:stability}
The FNIRVAR process is weakly stationary if and only if 
$\det(I_{Nl_F} - P^{*} z) \neq 0$ and $ \det(I_N - \Phi z) \neq 0$ for $|z| \leq 1$,
where
\begin{equation}
P^* = 
\begin{bmatrix}
P_1 & P_2 & P_3 & \cdots & P_{l_F} \\
I_N & 0   & 0   & \cdots & 0 \\
0   & I_N & 0   & \cdots & 0 \\
\vdots & \ddots & \ddots & \ddots & \vdots \\
0 & \cdots & 0 & I_N & 0
\end{bmatrix}
\end{equation}
is the companion matrix of the factor VAR process in~\eqref{eq:model-def}.
\end{proposition}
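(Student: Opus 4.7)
The plan is to apply the classical VAR stationarity theorem (see, e.g., \citet{lutkepohl2005new}) separately to the factor process $\{\boldsymbol{F}_t\}$ and to the idiosyncratic process $\{\boldsymbol{\xi}_t\}$, and then combine the two using the linear decomposition $\boldsymbol{X}_t = \Lambda \boldsymbol{F}_t + \boldsymbol{\xi}_t$ together with the (implicit) mutual independence of the innovations $\boldsymbol{u}_t$ and $\boldsymbol{\epsilon}_t$ standard in a factor-model setup.

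First I would rewrite the factor VAR$(l_F)$ in companion form by stacking $\widetilde{\boldsymbol{F}}_t \coloneqq (\boldsymbol{F}_t';\boldsymbol{F}_{t-1}';\ldots;\boldsymbol{F}_{t-l_F+1}')$, so that it satisfies the VAR(1) recursion $\widetilde{\boldsymbol{F}}_t = P^* \widetilde{\boldsymbol{F}}_{t-1} + \widetilde{\boldsymbol{u}}_t$, where $\widetilde{\boldsymbol{u}}_t = ((\Gamma_u^{1/2}\boldsymbol{u}_t)'; \boldsymbol{0}';\ldots; \boldsymbol{0}')$ and $P^*$ is the companion matrix displayed in the statement. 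The standard VAR stability theorem then states that this recursion admits a unique weakly stationary causal solution if and only if the characteristic polynomial $\det(I_{rl_F} - P^* z)$ has no roots in the closed complex unit disk. Reading off the top $r$ coordinates then gives a weakly stationary $\{\boldsymbol{F}_t\}$. The idiosyncratic equation is already in VAR(1) form, so the same theorem applied to $\boldsymbol{\xi}_t = \Phi \boldsymbol{\xi}_{t-1} + \boldsymbol{\epsilon}_t$ yields the condition $\det(I_N - \Phi z) \neq 0$ for $|z| \leq 1$.

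For the ``if'' direction, under the two conditions each subprocess admits an $\mathrm{MA}(\infty)$ representation driven by its own innovations. Taking the $\boldsymbol{u}$ and $\boldsymbol{\epsilon}$ shocks to be mutually independent white noise sequences, the joint process $(\boldsymbol{F}_t';\boldsymbol{\xi}_t')$ is weakly stationary, and $\boldsymbol{X}_t = \Lambda \boldsymbol{F}_t + \boldsymbol{\xi}_t$ inherits this property as a fixed linear functional with zero mean and lag-invariant autocovariance $\Gamma_X(h) = \Lambda \Gamma_F(h) \Lambda' + \Gamma_\xi(h)$. For the ``only if'' direction, the observation is that each of the two defining equations is autonomous: the factor equation involves only $\boldsymbol{F}$-lags and $\boldsymbol{u}$-innovations, and the idiosyncratic equation only $\boldsymbol{\xi}$-lags and $\boldsymbol{\epsilon}$-innovations. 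Hence weak stationarity of $\{\boldsymbol{X}_t\}$, together with the orthogonality of the two innovation streams, forces each subsystem to admit a weakly stationary solution, at which point the standard VAR theorem returns the two determinantal conditions.

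The main obstacle is not computational but rather the ``only if'' direction: I would have to be careful in explaining why weak stationarity of the observable process $\{\boldsymbol{X}_t\}$ propagates back to the two latent subsystems. This is handled by the autonomy of the two equations and by the independence of $\boldsymbol{u}_t$ and $\boldsymbol{\epsilon}_t$, which identifies the common and idiosyncratic components uniquely and hence rules out non-stationary cancellations between $\Lambda\boldsymbol{F}_t$ and $\boldsymbol{\xi}_t$ inside $\boldsymbol{X}_t$. Beyond this point, everything reduces to classical VAR theory once the companion form above is written down.
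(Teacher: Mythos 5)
Your proposal is correct and follows essentially the same route as the paper: reduce each latent subsystem to its companion-form VAR, apply the classical stationarity criterion to obtain the two determinantal conditions, and conclude that the sum $\Lambda\boldsymbol{F}_t + \boldsymbol{\xi}_t$ inherits weak stationarity. The only cosmetic difference is that the paper delegates the final two steps to citations (\citet{lutkepohl1984linear} for stationarity of the linear transform $\Lambda\boldsymbol{F}_t$ and \citet{granger1976time} for the sum of stationary processes being a stationary VARMA process) where you argue them directly, and you are somewhat more explicit about the ``only if'' direction than the paper is.
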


\begin{proof}
    The VAR processes $\{\bm{F}_{t}\}_{t \in \mathbb{Z}}$ and $\{\bm{\xi}_{t}\}_{t \in \mathbb{Z}}$ are weakly stationary if and only if $\text{det}(I_{r} - P_{1}z - \cdots - P_{l_{F}}z^{l_{F}}) \neq 0$ and $\text{det}(I_{N} - \Phi z) \neq 0$ for all $z \in \mathbb{C}$ such that $|z| \leq 1$ \citep[see][for example]{brockwell1991time}. By \citet[][Lemma 2.1]{tsay2013multivariate}, $\text{det}(I_{r} - P_{1}z - \cdots - P_{l_{F}}z^{l_{F}}) = \det(I_{Nl_F} - P^{*} z)$. By \citet[][Theorem 1]{lutkepohl1984linear}, if $\Lambda \neq 0$, then $\{\Lambda \bm{F}_{t}\}_{t \in \mathbb{Z}}$ is weakly stationary whenever $\det(I_{Nl_F} - P^{*} z) \neq 0$ for $|z| \leq 1$. Finally, by \citet[][]{granger1976time}, the sum of the processes $\{\Lambda \bm{F}_{t}\}_{t \in \mathbb{Z}}$ and $\{\bm{\xi}_{t}\}_{t \in \mathbb{Z}}$ is a stationary vector autoregressive moving average process whenever the individual processes are stationary. 
\end{proof}

\section{Estimation}\label{sec:estimation}

We adopt a two-step estimator of the FNIRVAR parameters $\Lambda$, $\boldsymbol{F}_{t}$, $P_{k}$, and $\Phi$. In the first step, the factors {$\boldsymbol{F}_t$} and loadings {$\Lambda$} of the factor model are estimated via PCA. After removing the estimated common component, the latent VAR parameters {$\Phi$} of the idiosyncratic model are then estimated in a second step. The following assumption of a large eigengap is required for this 
procedure.

\begin{assumption}
\label{ass:strong}
There exists a positive eigengap \(\Delta := \lambda_{\min}(\Gamma_{\chi}) - \lambda_{\max}(\Gamma_\xi)\) between the smallest nonzero eigenvalue of the common covariance matrix $\Gamma_{\chi}$ and the largest eigenvalue of the idiosyncratic covariance matrix $\Gamma_{\xi}$.
\end{assumption}

Consider a set of observations \(X = (\boldsymbol{x}_1, \ldots, \boldsymbol{x}_T)\) where \(\boldsymbol{x}_t = (x_{1t}, \ldots, x_{Nt})'\) is a realisation of the random variable \(\boldsymbol{X}_t\). Let \(E := \text{eig}(XX'/T)_{[:,1:r]}\) be the \(N \times r\) matrix of eigenvectors of the sample covariance matrix corresponding to the \(r\) largest eigenvalues. We estimate the factors and loadings via static principal component analysis as $\widehat{\Lambda} = E$ and $\widehat{F}_t = E' \boldsymbol{x}_t$. For prediction, the next step ahead factor estimate is given by
\[
\widehat{F}_{t+1} = \sum_{k=1}^{\ell_F} \widehat{P}_k \widehat{F}_{t-k},
\]
where \(P\) is estimated by least-squares regression. In particular, if \(Y_F = (F'_{\ell_F}, \ldots, F'_T)' \in \mathbb{R}^{(T - \ell_F) \times r}\) and
\[
X_F = 
\begin{pmatrix}
F'_0 & \cdots & F'_{\ell_F - 1} \\
\vdots & \ddots & \vdots \\
F'_{T - \ell_F - 1} & \cdots & F'_{T - 1}
\end{pmatrix}
\in \mathbb{R}^{(T - \ell_F) \times r \ell_F},
\]
then
\begin{equation}
\widehat{P} = (\hat{X}_{F}^{\prime} {\hat{X}_{F}})^{-1} \hat{X}_{F}^{\prime} \hat{Y}_{F}.
\end{equation}

We choose the number of factors $r$ using the information criteria of \citet{bai2002determining} and the order $l_{F}$ using AIC \citep{fan2023bridging}. 
Once the common component has been estimated as $\hat{\bm{\chi}}_{t} = \hat{\Lambda}\hat{\bm{F}}_{t}$, we estimate the idiosyncratic component as $\hat{\bm{\xi}}_{t} = \bm{X}_{t} - \hat{\bm{\chi}}_{t}$. Let $\hat{\chi} = (\hat{\bm{\chi}}_{1},\dots,\hat{\bm{\chi}}_{T})$ and $\hat{\xi} = (\hat{\bm{\xi}}_{1},\dots,\hat{\bm{\xi}}_{T})$. The parameters of the NIRVAR model for the idiosyncratic component are then estimated via Algorithm \ref{alg:NIRVAR}. Details on the final step of Algorithm~\ref{alg:NIRVAR} of estimating a restricted VAR model via OLS can be found in Chapter 5 of \citet{lutkepohl2005new}. 

\begin{algorithm}[t]
\caption{FNIRVAR estimation.}
\label{alg:NIRVAR}
\begin{algorithmic}[1]  
  \Require $N \times T$ design matrix $X$; number of factors $r$; factor VAR lag order $l_{F}$; embedding dimension $d$; number of clusters $K$.
  \State Estimate $r$ factors and corresponding loadings via static PCA: $\widehat{\Lambda} = E$ and $\widehat{F}_t = E' \boldsymbol{x}_t$ where $E := \text{eig}(XX'/T)_{[:,1:r]}$.
  \State Estimate $l_{F}$ factor VAR coefficient matrices $P_{k}$ via least-squares.
  \State Estimate the idiosyncratic component as $\hat{\bm{\xi}}_{t} = \bm{X}_{t} - \hat{\bm{\chi}}_{t}$.
  \State Compute the sample covariance matrix
        $\hat{\Gamma}_{\xi} \gets \hat{\xi}\hat{\xi}^{\prime}\!/\!T$.
  \State Obtain the $d$ eigenvectors $\bm{v}_1,\dots,\bm{v}_d$ corresponding to the $d$ largest eigenvalues in magnitude of $\hat{\Gamma}_{\xi}$, and form the $N \times d$ matrix $V$ with columns $\bm{v}_{1},\dots,\bm{v}_{d}$.
  \State Cluster the rows of $V$ via a $K$-component Gaussian mixture model. 
  \State Set $\hat{z}(i)$ to the cluster assignment of row $i$ for each $i\in[N]$.
  \State Define $\hat{A}_{ij} = \indfun\{\hat{z}(i) = \hat{z}(j)\}$.
  \State Estimate the parameters $\Tilde{\Phi}$ of the restricted VAR model $\bm{\xi}_{t} = \hat{A}\odot \tilde{\Phi} \bm{\xi}_{t-1} + \bm{\epsilon}^{(\hat{A})}_{t}$ via OLS.
  
  \Returnx Factors $\widehat{F}_t$, loadings $\widehat{\Lambda}$, factor VAR coefficient matrices $\widehat{P}_{k},$ cluster labels $\hat{z}(i)$; estimated coefficient matrix $\hat{\Phi}$. 
\end{algorithmic}
\end{algorithm}

Some remarks are in order motivating the NIRVAR estimation procedure in Algorithm~\ref{alg:NIRVAR}. 
Under suitable sparsity and noise levels, spectral clustering of the adjacency matrix of an SBM consistently recovers the blocks of the SBM \citep[][]{lei2015consistency}. In our case, the adjacency matrix is unobserved and we instead use the sample covariance matrix $\hat{\Gamma}_{\xi}$ as the similarity matrix to embed. This is reasonable since the covariance matrix $\Gamma_{\xi}$ inherits the block structure of the adjacency matrix through the relation $\text{vec}(\Gamma_{\xi}) = (I_{N^{2}} - \Phi \otimes \Phi)^{-1}\text{vec}(\Gamma_{\epsilon})$ \citep[][]{lutkepohl2005new}. As proven by \citet{martin2024nirvar}, the hard thresholding given by step (4) above yields a biased estimator whenever the network contains edges between nodes in different communities. On the other hand, the advantages of the hard thresholding are a reduction in variance of the VAR estimates, as well as regularisation which allows for estimation in a high-dimensional setting.

Various methods for choosing the embedding dimension $d$ have been proposed in the literature, such as the profile likelihood method of \citet{zhu2006automatic} or the ScreeNOT method of \citet{donoho2023screenot}. We choose $d$ using tools from random matrix theory that are standard in finance and economics \citep[see][for example]{laloux2000random}. In particular, $d$ is set equal to the number of eigenvalues of $\hat{\Gamma}_{\xi}$ that are larger than the upper bound of the support of the Mar\v{c}enko-Pastur distribution \citep[][]{marvcenko1967distribution}. Since the Mar\v{c}enko-Pastur distribution is the limiting eigenvalue distribution of a Wishart matrix, the interpretation is that those eigenvalues that are greater than the upper bound of its support cannot be attributed to random noise and are thus ``informative directions''.

The number of clusters $K$ could be chosen using an information criterion such as BIC. However, we follow the argument of \citet{martin2024nirvar} and set $K=d$ since, when represented as a latent position random graph \citep[][]{hoff2002latent}, the expected adjacency matrix of an SBM has $K$ distinct latent positions each of dimension $d=K$.

The computational complexity of FNIRVAR is dominated by estimation of the idiosyncratic component, which is $O(N^{3}/K^{2} + TN^{2}/K)$ when $\Gamma_{\epsilon}$ is homoskedastic and $O(N^{3} + TN^{2})$ otherwise (since generalised least-squares estimation must be used in this case) \citep[][]{martin2024nirvar}. Estimation of the factors and loadings via PCA on the other hand can be done using a Lanczos type algorithm in $O(TNr)$ time. Least-squares estimation of the factor VAR process is $O(Tr^{2}l_{F}^{2})$ and is therefore subdominant since $r \ll N$ and $l_{F} \ll N$. We note that least-squares estimation of an unrestricted VAR is $O(N^{3}+ TN^{2})$, and thus, FNIRVAR estimation is faster whenever $\Gamma_{\epsilon}$ is assumed to be homoskedastic.

\section{Simulation study}\label{sec:sim}


We conduct three simulation studies investigating the relationship between the common and idiosyncratic components of FNIRVAR. Firstly, we generate data from the FNIRVAR model and conduct a backtesting experiment comparing the predictive performance of a factor model with no idiosyncratic model, a factor model with a LASSO estimated sparse VAR model, and the FNIRVAR prediction model outlined in Section \ref{sec:estimation}. As expected, given the data generating process, FNIRVAR achieves the lowest mean squared prediction error (MSPE). 
Secondly, we look at the predictive performance of FNIRVAR when the generative model is a static factor model. The study highlights the benefits of including the FNIRVAR idiosyncratic model when the estimated number of factors is misspecified. Thirdly, we consider the effect of Assumption \ref{ass:strong} on the sample covariance eigenvalues under a FNIRVAR generative model. In particular, the study highlights the necessity of Assumption \ref{ass:strong} for the proposed two-step estimation procedure. 

\subsection{Prediction under 
FNIRVAR} 

We generate data from an FNIRVAR model with $N=100$, $T=1500$, $r=5$, $q=5$, $l_{F} = 2$, $K=4$, $B_{ij} = 0.9$ for $i=j$ and $B_{ij} = 0.1$ for $i \neq j$. $P_{k}$ was chosen to have unit diagonal entries and off diagonal entries equal to -0.2, with a subsequent rescaling such that $\rho(P^{*})=0.7$. The weights for the NIRVAR coefficient matrix are chosen as $\Tilde{\Phi}_{ij} \sim \mathcal{N}(\mu_{i},1)$ where $\mu_{i} \coloneqq (-1)^{z(i)+1}z(i)$, and $\Phi = A \odot \Tilde{\Phi}$ is then normalised such that $\rho(\Phi) = 0.9$. The error covariance matrices $\Gamma_{u}$ and $\Gamma_{\epsilon}$ are both set to be the identity. The task is one-step ahead prediction using a rolling window backtest framework with a lookback window of 1000 observations. Table \ref{tab:simFNIRVAR} shows the MSPE and corresponding standard error of the MSPE for three different prediction models: FNIRVAR, a static factor model with no idiosyncratic model, and a static factor model with a LASSO estimated sparse VAR model. The LASSO penalty was chosen via BIC. FNIRVAR achieves the lowest MSPE as expected given that the data generating process is the FNIRVAR model. 

\begin{table}[t]
\scalebox{0.9}{
\begin{tabular}{cccc}
\toprule
 & FNIRVAR  & Factors Only & Factors + LASSO  \\
\midrule
MSPE & \textbf{1.87}  & 1.91 & 1.95 \\
SE ($\times 10^{-1})$ & \textbf{1.18} & 1.21 & 1.23 \\
\bottomrule
\end{tabular}
}
\caption{MSPE and corresponding standard error (SE) for three prediction models when the DGP is FNIRVAR.} \label{tab:simFNIRVAR}%
\end{table}

\subsection{Misspecifying the number of factors}
We generate data from a static factor model with $N=100$, $T=1500$, $r=5$ and $l_{F} = 2$. The number of common shocks was $q=5$ and $P_{k}$ was chosen to have unit diagonal entries and off diagonal entries equal to -0.2, with a subsequent rescaling such that $\rho(P^{*})=0.7$. The task is one-step ahead prediction using a rolling window backtest framework with a lookback window of 1000 observations. Table \ref{table:sim-misspecify} shows the MSPE and corresponding standard error for static factor and FNIRVAR prediction models, assuming different $r$ values for the predictions. For FNIRVAR, the number of blocks was set as $K=4$. Even though the data generating process has no idiosyncratic component, FNIRVAR outperforms a static factor model when the number of factors is misspecified. The NIRVAR idiosyncratic model captures some of the variance left over from under-specifying $r$. Nonetheless, FNIRVAR does not adversely affect the predictive performance when $r$ is correctly specified. 

\begin{table}[t]
\centering
\scalebox{0.9}{
\begin{tabular}{c|ccc|ccc}
    \toprule
    Metric
    & \multicolumn{3}{c|}{MSPE} 
    & \multicolumn{3}{c}{Standard error }   \\
    \midrule
    \diagbox{Model}{$\hat{r}$}
    & 1  & 3 & 5 & 1 & 3 & 5  \\
    \midrule
    FNIRVAR & 1.41  & 1.29 & \textbf{1.22}  & 0.09  & 0.08 & \textbf{0.07}  \\
    Factors Only  & 3.20 & 1.51  & \textbf{1.22}  & 0.22  & 0.32  & \textbf{0.07}    \\
    \bottomrule
\end{tabular}
}
\caption{MSPE and corresponding standard error of two different prediction models for a backtesting experiment on data generated by a static factor model with $r = 5$.}
\label{table:sim-misspecify}
\end{table}

\subsection{Violating the positive eigengap assumption}

We simulate data from the FNIRVAR model for different values of $N$, with $T=1000$, and monitor the growth of the eigenvalues of $\hat{\Gamma}$ as $N$ increases. For this simulation, we choose a single factor ($r=1$) and five blocks ($K = 5$). The loadings $\Lambda$ are sampled from a mixture of normals: $\Lambda_{i,j} \sim 0.5\cdot \mathcal{N}(1,\sigma_{\Lambda}^{2}) + 0.5\cdot\mathcal{N}(-1,\sigma_{\Lambda}^{2})$. The factor VAR coefficient matrices are sampled independently from $\mathcal{N}(0,0.1)$ and then rescaled such that $\rho(P^{*}) = 0.6$ to ensure stationarity.  The number of common shocks is chosen to be equal to the number of factors ($q=r$) and $\Gamma_u$ is set to the identity matrix. The number of factor lags is chosen to be $l_{F} = 5$. The block matrix $B$ is set to have diagonal values of $0.95$ and off-diagonal values of $0.01$. The weights for the NIRVAR coefficient matrix are chosen as $\Tilde{\Phi}_{ij} \sim \mathcal{N}(\mu_{i},1)$ where $\mu_{i} \coloneqq (-1)^{z(i)+1}z(i)$, then symmetrized as $\Tilde{\Phi} \leftarrow (\Tilde{\Phi} + \Tilde{\Phi}^{\prime})/2$. The NIRVAR coefficient matrix $\Phi = A \odot \Tilde{\Phi}$ is then normalised such that $\rho(\Phi) = 0.9$. The idiosyncratic error covariance matrix is chosen to be the identity: $\Gamma_{\epsilon} = I_{N}$. For each $N$, 20 replicate datasets are simulated using the above hyperparameters. We control the eigengap $\Delta$ through the choice of $\sigma_{\Lambda}$. For $\sigma_{\Lambda}^{2} = 9 \times10^{-3}$, $\Delta > 0$ whereas for $\sigma_{\Lambda}^{2} = 2.5 \times 10^{-4}$, $\Delta < 0$. Figure \ref{fig:sample-evals} shows the minimum eigenvalue of $\hat{\Gamma}_{\chi}$ and maximum eigenvalue of $\hat{\Gamma}_{\xi}$ as a function of $N$ for the two scenarios: $\Delta > 0$ and $\Delta < 0$. In both cases, the eigenvalue of $\hat{\Gamma}_{\chi}$ grows linearly in $N$, as is expected for a strong factor \citep[see][for further discussion on the growth rate of eigenvalues of factor models]{barigozzi2024dynamic}. However, when $\Delta <0$ (the case where Assumption \ref{ass:strong} is violated), $\lambda_{\text{max}}(\hat{\Gamma}_{\chi}) < \lambda_{\text{min}}(\hat{\Gamma}_{\xi}) $, and therefore, removing the top eigenvalues of $\hat{\Gamma}$, as in our two-step estimation procedure, will not pick out the common component. As a further observation, we note that the eigenvalues of $\hat{\Gamma}$ do not grow linearly with $N$, confirming that the idiosyncratic component of FNIRVAR cannot be interpreted as a strong factor model.

\begin{figure}[t]
\captionsetup{skip=0pt}
\centering
\begin{subfigure}[t]{.235\textwidth}
  \centering
  \caption{$\Delta>0$}
  \includegraphics[width=.99\linewidth]{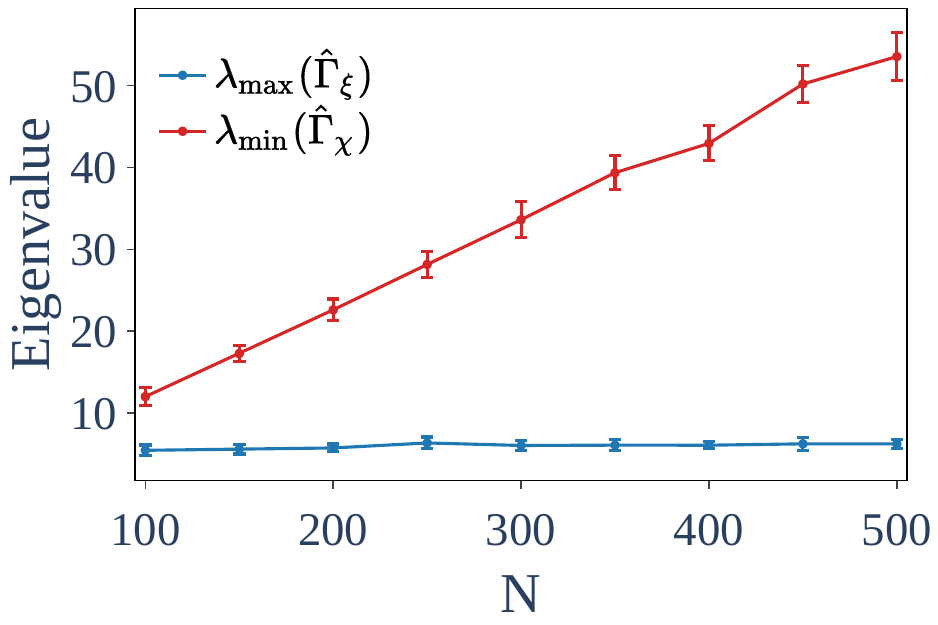}
  \label{fig:fred-gt-sub1}
\end{subfigure} 
\hfill
\begin{subfigure}[t]{.235\textwidth}
  \centering
  \caption{$\Delta <0$}
  \includegraphics[width=.99\linewidth]{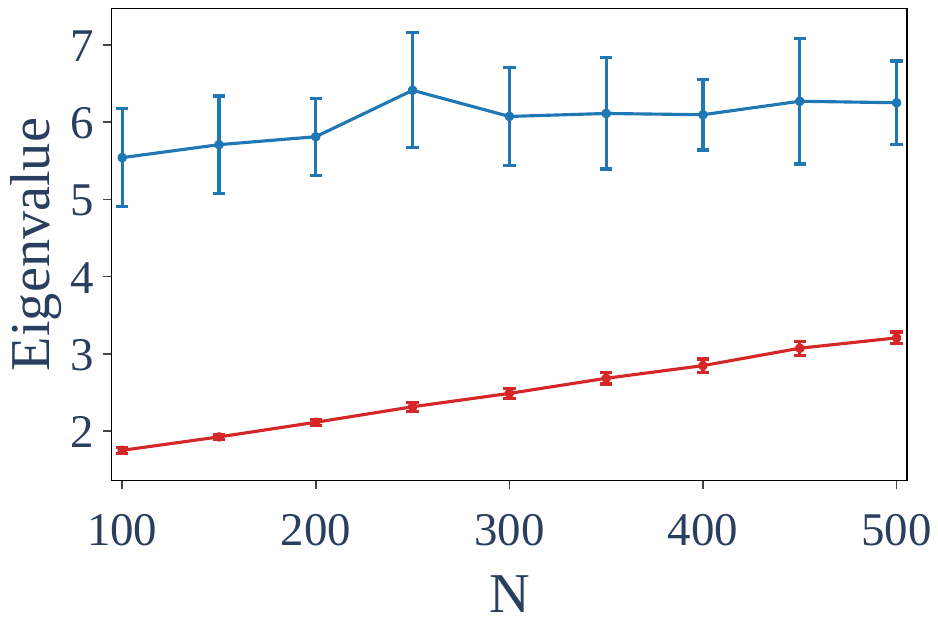}
  \label{fig:fred-gt-sub2}
\end{subfigure}
\caption{Growth of eigenvalues of $\hat{\Gamma}$ with $N$ under a FNIRVAR model when Assumption \ref{ass:strong} (a) holds and (b) is violated. }
\label{fig:sample-evals}
\end{figure}

\section{Empirical analysis}\label{sec:applications}

The FNIRVAR model is applied to three data sets: \textit{(i)} a financial dataset of daily excess market returns; \textit{(ii)} a financial dataset of thirty minutely intra-day excess market returns; \textit{(iii)} a vintage of macroeconomic indicators typically used for predicting US industrial production. Descriptions of the data sources and transformations applied to each of the raw data sets are given below. Our primary interest is the one-step-ahead out-of-sample predictive performance of FNIRVAR. We compare the results against the factor plus sparse VAR model of \citet{miao2023high} and \citet{fan2023bridging} as well as a static factor model with no idiosyncratic model. \citet{fan2023bridging} estimate the sparse VAR model using LASSO whereas \citet{krampe2025factor} estimate it using adaptive LASSO. We follow \citet{fan2023bridging} and use LASSO, with the tuning parameter selected via BIC. For the financial applications, we examine the economic benefit of an FNIRVAR trading strategy under different portfolio allocation and transaction cost settings. 
\subsection{Application to daily returns prediction}\label{subsec:daily}

\paragraph{Data description}
The previous close-to-close (pvCLCL) price returns of $N = 648$ financial assets between 03/01/2000 and 31/12/2020 ($T = 5279$) were derived from databases provided by the Center for Research in Security Prices, LLC, an affiliate of the University of Chicago Booth School of Business. The assets considered include stocks in the S\&P 500 and Nasdaq as well as a small number of exchange-traded funds. Using the S\&P 500 index as a proxy for the market, we construct the pvCLCL market excess returns by subtracting the return of SPY, the exchange traded fund which tracks the S\&P 500 index. 

\paragraph{Task}
The task is to predict the next day pvCLCL market excess returns. We adopt a rolling window backtesting framework with a fixed four year look-back. The predictive performance between 01/01/2004 and 31/12/2020 of the three models are compared via commonly used metrics in the financial literature \citep{gu2020empirical}. 
The mean number of factors chosen by 
PCp2 
\citep{bai2002determining} over the backtest period was $ 12.16$, whilst the mean order of the factor VAR model was $ 3.97$.

\paragraph{Performance metrics}
Each day we measure the profit and loss as $\text{PnL}_{t} = \sum_{i=1}^{N} \omega_{i}^{(t)} \text{sign}( \hat{s}_{i}^{(t)} )\, s_{i}^{(t)}$ where $ \hat{s}_{i}^{(t)} $ is the predicted return of asset $i$ on day $t$, $s_{i}^{(t)}$ is the realised return of asset $i$ on day $t$, and  $\omega_{i}^{(t)}$ are  portfolio weights such that $\sum_{i=1}^{N} \omega_{i}^{(t)}=1$ for each $t \in [T]$. Letting $\text{PnL} \coloneqq \{\text{PnL}_{t}\}_{t = 1,\dots,T}$, the annualised Sharpe Ratio (SR) is then given by $\sqrt{252} \times \text{mean}(\text{PnL})/\text{stdev}(\text{PnL})$. Furthermore, we compute the mean daily PnL (given in basis points, denoted \textit{bpts}, with 1\% = 100bpts).
Fixed transaction costs of 1bpts and 2bpts are also considered, whereby a fixed amount is subtracted from the PnL every time we flip our position in a
given asset from long (short) to short (long). This is a realistic level of transaction cost that one might expect from a broker.

\paragraph{Equal-weighted portfolio} 
The simplest possible porfolio construction strategy is an equal weighted portolio defined by uniform weights, $\omega_{i}^{(t)} = 1/N$ for all $i \in [N], t \in[T]$. Table \ref{table:daily-horserace} gives the Sharpe ratios and mean daily PnL for each model for different levels of fixed transaction costs. FNIRVAR has the highest Sharpe ratios and mean daily PnL values across all transaction cost levels, followed by the LASSO estimated factor plus sparse VAR model. This suggests that modelling the idiosyncratic component on top of the common component can lead to significant economic gains.

We also consider decile portfolios, in which solely the top $x$\% largest in magnitude signals are traded each day \citep{fama1992cross}. The Sharpe ratio and mean daily PnL (using FNIRVAR as the prediction model) for equal weighted decile portfolios with different transaction costs are shown in Table \ref{tab:equal_decile_transaction_combined}. 
The table shows that lower decile portfolio have higher PnL, with the Sharpe ratios maintaining a similar level.

\begin{table}[t]
\centering
\scalebox{0.8}{
\begin{tabular}{c|ccc|ccc}
    \toprule
    Metric
    & \multicolumn{3}{c|}{Sharpe ratio} 
    & \multicolumn{3}{c}{Mean daily PnL }   \\
    \midrule
    \diagbox{Model}{Cost}
    & 0 bpts  & 1 bpts & 2 bpts & 0 bpts & 1 bpts & 2 bpts  \\
    \midrule
    FNIRVAR & \textbf{1.95}  & 1.66 & 1.37  & \textbf{3.41}  & 2.90 & 2.40  \\
    Factors + LASSO  & 1.64  & 1.29 & 0.94 & 2.36 & 1.86 & 1.35  \\
    Factors Only  & 1.27 & 1.07 & 0.86 & 3.09 & 2.59 & 2.09   \\
    \bottomrule
\end{tabular}
}
\caption{Sharpe ratio, mean daily PnL (in bpts) for three prediction models under an equal weighted portfolio construction, for different levels of fixed transaction costs.}
\label{table:daily-horserace}
\end{table}


\begin{table}[t]
\centering
\scalebox{0.8}{
\begin{tabular}{c|ccc|ccc}
    \toprule
    Metric
    & \multicolumn{3}{c|}{Sharpe ratio} 
    & \multicolumn{3}{c}{Mean daily PnL} \\
    \midrule
    \diagbox{Decile}{Cost}
    & 0 bpts & 1 bpt & 2 bpts & 0 bpts & 1 bpt & 2 bpts \\
    \midrule
    100 & 1.95 & 1.66 & 1.37 & 3.41 & 2.90 & 2.40 \\
    75  & 1.99 & 1.76 & 1.53 & 4.35 & 3.85 & 3.34 \\
    50  & \textbf{2.02} & 1.83 & 1.65 & 5.50 & 4.99 & 4.49 \\
    25  & 1.90 & 1.76 & 1.62 & \textbf{6.81} & 6.30 & 5.80 \\
    \bottomrule
\end{tabular}
}
\caption{Sharpe ratio and mean daily PnL (in bpts) for equal weighted decile portfolios under different transaction costs.}
\label{tab:equal_decile_transaction_combined}
\end{table}

\paragraph{Value-weighted portfolio}
A value-weighted portfolio accounts for the liquidity in each asset and is defined by
$
    \omega_{i}^{(t)} =\min (\alpha \nu_{i,t} , \beta), 
$
where 
$
    \nu_{i,t} = \text{median}\{V_{i,t},\dots,V_{i,1} \},
$
with $V_{i,t}$ being the dollar volume traded for asset $i$ on day $t$. The parameters $\alpha$ and $\beta$ control the imbalance between portfolio weights: as $\alpha $ increases and $\beta$ decreases we get closer to the equal weighted portfolio setting. Table \ref{tab:alpha_beta_combined} shows that as $\alpha $ increases and $\beta$ decreases, the Sharpe ratio increases and portfolio variance decreases, respectively. The mean portfolio weight for every choice of $(\alpha,\beta)$ was 0.00154. We also show the total book value and mean bet size in dollars for different values of $(\alpha,\beta)$ in Table \ref{tab:alpha_beta_combined_betsize}. We note that the largest book size considered is $\$156.9$ million which is much smaller that the total market. Therefore, market impact will be negligible. For fixed $\alpha = 0.001$ and $\beta = 500,000$, Table \ref{tab:value_decile_transaction_combined} shows the Sharpe ratio and mean daily PnL for various decile portfolios and different transaction costs. Note that the choice of $(\alpha,\beta)$ in Table \ref{tab:value_decile_transaction_combined} is the furthest 
from an equal-weighted portfolio out of all pairs of $(\alpha,\beta)$ considered. Again, lower decile portfolios give larger PnL. 


\begin{table}[t]
\centering
\scalebox{0.8}{
\begin{tabular}{c|ccc|ccc}
    \toprule
    Metric
    & \multicolumn{3}{c|}{Sharpe ratio}
    & \multicolumn{3}{c}{Mean std.\ of weights} \\
    \midrule
    \diagbox{$\alpha$}{$\beta$}
    & 100{,}000 & 200{,}000 & 500{,}000 & 100{,}000 & 200{,}000 & 500{,}000 \\
    \midrule
    0.001 & 1.14 & 1.05 & 1.00 & 0.00165 & 0.00203 & 0.00265 \\
    0.005 & 1.34 & 1.26 & 1.14 & 0.00101 & 0.00124 & 0.00165 \\
    0.010 & \textbf{1.44} & 1.34 & 1.24 & 0.00082 & 0.00101 & 0.00133 \\
    \bottomrule
\end{tabular}
}
\caption{Sharpe ratio and mean daily standard deviation of portfolio weights for different values of $\alpha$ and $\beta$.}
\label{tab:alpha_beta_combined}
\end{table}

\begin{table}[t]
\centering
\scalebox{0.8}{
\begin{tabular}{c|ccc|ccc}
    \toprule
    Metric
    & \multicolumn{3}{c|}{Total bet size ($\times 10^{6}$)}
    & \multicolumn{3}{c}{Mean bet size per asset ($\times 10^{3}$)} \\
    \midrule
    \diagbox{$\alpha$}{$\beta$}
    & 100{,}000 & 200{,}000 & 500{,}000 & 100{,}000 & 200{,}000 & 500{,}000 \\
    \midrule
    0.001 & 23.7  & 32.2 & 41.5 & 36.6 & 49.7 & 64.1 \\
    0.005 & 40.8 & 67.7 & 118.5 & 63.0 & 104.5 & 182.9 \\
    0.010 & 46.8 & 81.6 & 156.9 & 72.3 & 126.0 & 242.1 \\
    \bottomrule
\end{tabular}
}
\caption{Bet sizes per asset for different values of $\alpha$ and $\beta$.}
\label{tab:alpha_beta_combined_betsize}
\end{table}


\begin{table}[t]
\centering
\scalebox{0.8}{
\begin{tabular}{c|ccc|ccc}
    \toprule
    Metric
    & \multicolumn{3}{c|}{Sharpe Ratio}
    & \multicolumn{3}{c}{Mean Daily PnL (bpts)} \\
    \midrule
    \diagbox{Decile}{Cost}
    & 0 bpts & 1 bpt & 2 bpts & 0 bpts & 1 bpt & 2 bpts \\
    \midrule
    100 & 1.00 & 0.74 & 0.47 & 1.88 & 1.38 & 0.87 \\
    75  & \textbf{1.07} & 0.86 & 0.66 & 2.60 & 2.09 & 1.58 \\
    50  & 1.05 & 0.89 & 0.73 & 3.30 & 2.78 & 2.27 \\
    25  & 0.87 & 0.76 & 0.65 & \textbf{3.84} & 3.34 & 2.83 \\
    \bottomrule
\end{tabular}
}
\caption{Sharpe ratio and mean daily PnL (in basis points) for value weighted decile portfolios under different transaction costs ($\alpha = 0.001$, $\beta = 500{,}000$).}
\label{tab:value_decile_transaction_combined}
\end{table}

\subsection{Intraday returns prediction}\label{subsec:intraday}

\paragraph{Data description}
The minutely best-bid and ask prices for a universe of $N=516$ assets in the Nasdaq between 27/06/2007 and 31/12/2021 ($T=3656$) were extracted from the LOBSTER database\footnote{LOBSTER: Limit Order Book System - The Efficient Reconstructor at Humboldt Universität zu Berlin, Germany. \url{https://lobsterdata.com/}}. From this, minutely mid-market prices were constructed. More specifically, 30-minute log returns were then computed, leading to 13 observations per day (overnight returns were not considered). The raw data contained some erroneous values which were dealt with by setting any outliers, defined as a thirty minutely return of over 25 percent, to zero. 


\paragraph{Task}
The task is to predict the next 30-minutes market excess returns. Again, we adopt a rolling window backtesting framework with a fixed look-back of six months. Unlike the daily returns application, however, we do not recompute the model parameters at each backtesting step due to computational expense. Instead, we recompute the factors, loadings, and covariance matrix of the estimated idiosyncratic component every month. The universe of assets considered each month is defined as the set of assets for which there exists data at every time point of the look-back window for that month. If an asset enters the market mid-month, we include it in our universe at the beginning of the following month. We compare the same three models as in the application to daily returns prediction in Section \ref{subsec:daily}.

\paragraph{Performance metrics} 
As with the daily returns prediction, we consider the Sharpe ratio and mean daily PnL. Note, however, that in this case, $\text{PnL}_{t}$ is the sum of the intra-day returns for day $t$.
Figure~\ref{fig:intraday-CumPnL} shows the cumulative PnL in bpts for the three prediction models under an equal weighted portfolio allocation with no transaction costs. FNIRVAR yielded a Sharpe ratio of 2.20, the LASSO estimated factor plus sparse VAR model gave a Sharpe ratio of 1.14, whilst the static factor model alone gave a Sharpe ratio of 0.58. 
A passive strategy whereby one waits for an order to be filled without crossing the spread is realistic in an intra-day setting. As such, we do not consider fixed transaction costs here. In fact, for a passive strategy, we are likely to earn half-spread each time our order is filled, though one needs to take into account adverse selection as well.

\begin{figure}[t]
    \centering
    \includegraphics[width=0.975\linewidth]{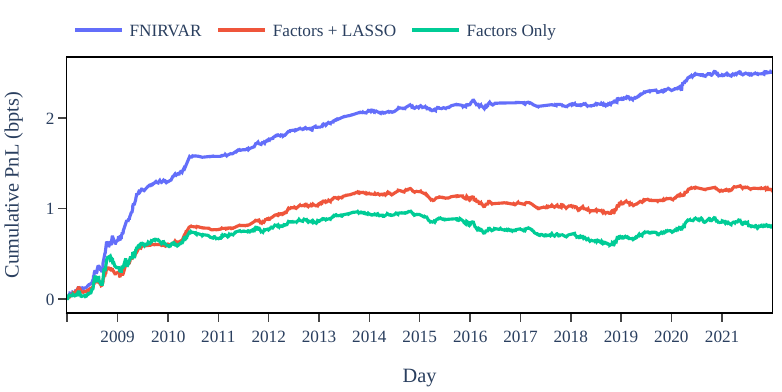}
    \caption{Cumulative PnL in bpts for three models backtest on 30 minutely intraday data.}
    \label{fig:intraday-CumPnL}
\end{figure}
\subsection{US Industrial Production}

\paragraph{Data description}
FRED-MD\footnote{Data available at: \url{https://research.stlouisfed.org/econ/mccracken/fred-databases}.} \citep{mccracken2016fred} is a publicly accessible database of monthly observations of macroeconomic variables, updated in real-time. We choose the August 2022 vintage of the FRED-MD database which extends from January 1960 until December 2019 ($T=719$), preprocessed as discussed in \cite{mccracken2016fred}. Only variables with all observations in the time period are used ($N=122$).

\paragraph{Task}
The task is one-step ahead prediction of the first-order difference of the logarithm of the monthly industrial
production (IP) index. We backtest each model from January 2000 until December 2019 using a rolling window
with a look-back window of 480 observations. We set $r = 8$ for the entire backtest as suggested by \citet{mccracken2016fred}. The order $l_{F}$ of the factor VAR model changed over the course of the backtesting period, having a mean value of 4.17 over the period. 

\paragraph{Peformance metrics}
Table \ref{tab:fred} shows that using FNIRVAR reduces the overall MSE compared with a factor model with no idiosyncratic model. In contrast, combining a factor model with a LASSO estimated sparse VAR model provides no forecasting improvement over the factor model alone. 

\begin{table}[t]
\scalebox{0.8}{
\begin{tabular}{cccc}
\toprule
 & FNIRVAR  & Factors Only & Factors + LASSO  \\
\midrule
Overall MSE   & \textbf{0.0074}  & 0.0077 & 0.0077 \\
\bottomrule
\end{tabular}
}
\caption{Overall MSE for the task of forecasting US IP.} 
\label{tab:fred}%
\end{table}

\section{Conclusion and discussion}\label{sec:coclusion}

In this work, we introduced a factor-driven network-based VAR model called FNIRVAR for modelling the common and idiosyncratic components of a high dimensional multivariate time series. In FNIRVAR, a static factor model is combined with a network VAR model whose coefficient matrix is the weighted adjacency matrix of a SBM. The regularisation due to the network makes FNIRVAR an appropriate model for high dimensional settings. Under the assumption of strong factors and a large eigengap, we estimate the factors and loadings via PCA. NIRVAR estimation is employed on the remaining idiosyncratic component leading to a block-sparse estimated VAR coefficient matrix. As the FNIRVAR estimator does not require an observed network, it is especially suited to applications in finance and economics where human-defined networks may not be available. 
To test the predictive performance of FNIRVAR, we constructed a backtesting framework for three datasets of daily returns, intraday returns, and macroeconomic variables. In all three cases FNIRVAR out-performed a static factor model as well as a static factor plus LASSO estimated sparse VAR model. FNIRVAR showed particularly compelling results on the financial applications, achieving a Sharpe ratio of $1.95$ for the task of daily returns prediction, and $2.20$ for the task of thirty minutely intraday returns prediction. These results highlight the benefits of including sparse estimation along with dense factor modelling. 

The FNIRVAR framework is especially suited to datasets containing groups of highly co-moving variables. From one perspective, the FNIRVAR estimator can be viewed as picking up block-weak factors \citep[see][for a discussion of block-weak factors]{barigozzi2024dynamic}. Theoretical analysis of the relation between FNIRVAR and weak factors is an interesting area for future research. Another future theoretical avenue is to investigate the asymptotic properties of the FNIRVAR estimator. \citet{fan2023bridging} develop theory on combining dense and sparse time series models; the challenge for FNIRVAR is to show that the idiosyncratic model is consistent. This would require extending existing SBM results \citep[for example,][]{lei2015consistency,chen2021spectral,cape2019signal} to the setting of dependent data. 
Also, the static factor model of FNIRVAR could be modified to allow for a dynamic factor model. In this case, dynamic PCA would be required to estimate the factor space \citep[see][for example]{barigozzi2024fnets}. Furthermore, the assumption of strong factors and a large eigengap could be replaced by a sparsity assumption on the idiosyncratic covariance matrix, with the optimisation-based approach of \citet{agarwal2012noisy} being used for estimation.  

\begin{acks}
BM acknowledges funding from the Engineering and Physical Sciences Research Council (EPSRC), grant number EP/S023151/1.
FSP acknowledges funding from the EPSRC, grant no. EP/Y002113/1.
\end{acks}

\bibliographystyle{ACM-Reference-Format}
\bibliography{references}


\begin{thebibliography}{52}


\ifx \showCODEN    \undefined \def \showCODEN     #1{\unskip}     \fi
\ifx \showISBNx    \undefined \def \showISBNx     #1{\unskip}     \fi
\ifx \showISBNxiii \undefined \def \showISBNxiii  #1{\unskip}     \fi
\ifx \showISSN     \undefined \def \showISSN      #1{\unskip}     \fi
\ifx \showLCCN     \undefined \def \showLCCN      #1{\unskip}     \fi
\ifx \shownote     \undefined \def \shownote      #1{#1}          \fi
\ifx \showarticletitle \undefined \def \showarticletitle #1{#1}   \fi
\ifx \showURL      \undefined \def \showURL       {\relax}        \fi
\providecommand\bibfield[2]{#2}
\providecommand\bibinfo[2]{#2}
\providecommand\natexlab[1]{#1}
\providecommand\showeprint[2][]{arXiv:#2}

\bibitem[Adamek et~al\mbox{.}(2023)]%
        {adamek2023lasso}
\bibfield{author}{\bibinfo{person}{Robert Adamek}, \bibinfo{person}{Stephan Smeekes}, {and} \bibinfo{person}{Ines Wilms}.} \bibinfo{year}{2023}\natexlab{}.
\newblock \showarticletitle{Lasso inference for high-dimensional time series}.
\newblock \bibinfo{journal}{\emph{Journal of Econometrics}} \bibinfo{volume}{235}, \bibinfo{number}{2} (\bibinfo{year}{2023}), \bibinfo{pages}{1114--1143}.
\newblock


\bibitem[Agarwal et~al\mbox{.}(2012)]%
        {agarwal2012noisy}
\bibfield{author}{\bibinfo{person}{Alekh Agarwal}, \bibinfo{person}{Sahand Negahban}, {and} \bibinfo{person}{Martin~J Wainwright}.} \bibinfo{year}{2012}\natexlab{}.
\newblock \showarticletitle{Noisy matrix decomposition via convex relaxation: optimal rates in high dimensions}.
\newblock \bibinfo{journal}{\emph{The Annals of Statistics}} \bibinfo{volume}{40}, \bibinfo{number}{2} (\bibinfo{year}{2012}), \bibinfo{pages}{1171--1197}.
\newblock


\bibitem[Bai(2003)]%
        {bai2003inferential}
\bibfield{author}{\bibinfo{person}{Jushan Bai}.} \bibinfo{year}{2003}\natexlab{}.
\newblock \showarticletitle{Inferential theory for factor models of large dimensions}.
\newblock \bibinfo{journal}{\emph{Econometrica}} \bibinfo{volume}{71}, \bibinfo{number}{1} (\bibinfo{year}{2003}), \bibinfo{pages}{135--171}.
\newblock


\bibitem[Bai and Ng(2002)]%
        {bai2002determining}
\bibfield{author}{\bibinfo{person}{Jushan Bai} {and} \bibinfo{person}{Serena Ng}.} \bibinfo{year}{2002}\natexlab{}.
\newblock \showarticletitle{Determining the number of factors in approximate factor models}.
\newblock \bibinfo{journal}{\emph{Econometrica}} \bibinfo{volume}{70}, \bibinfo{number}{1} (\bibinfo{year}{2002}), \bibinfo{pages}{191--221}.
\newblock


\bibitem[Barigozzi et~al\mbox{.}(2025)]%
        {barigozzi2025factor}
\bibfield{author}{\bibinfo{person}{Matteo Barigozzi}, \bibinfo{person}{Giuseppe Cavaliere}, {and} \bibinfo{person}{Graziano Moramarco}.} \bibinfo{year}{2025}\natexlab{}.
\newblock \showarticletitle{Factor network autoregressions}.
\newblock \bibinfo{journal}{\emph{Journal of Business \& Economic Statistics}} (\bibinfo{year}{2025}), \bibinfo{pages}{1--14}.
\newblock


\bibitem[Barigozzi et~al\mbox{.}(2024)]%
        {barigozzi2024fnets}
\bibfield{author}{\bibinfo{person}{Matteo Barigozzi}, \bibinfo{person}{Haeran Cho}, {and} \bibinfo{person}{Dom Owens}.} \bibinfo{year}{2024}\natexlab{}.
\newblock \showarticletitle{FNETS: Factor-adjusted network estimation and forecasting for high-dimensional time series}.
\newblock \bibinfo{journal}{\emph{Journal of Business \& Economic Statistics}} \bibinfo{volume}{42}, \bibinfo{number}{3} (\bibinfo{year}{2024}), \bibinfo{pages}{890--902}.
\newblock


\bibitem[Barigozzi and Hallin(2024)]%
        {barigozzi2024dynamic}
\bibfield{author}{\bibinfo{person}{Matteo Barigozzi} {and} \bibinfo{person}{Marc Hallin}.} \bibinfo{year}{2024}\natexlab{}.
\newblock \showarticletitle{The Dynamic, the Static, and the Weak: Factor Models and the Analysis of High-Dimensional Time Series}.
\newblock \bibinfo{journal}{\emph{Journal of Time Series Analysis}} (\bibinfo{year}{2024}).
\newblock


\bibitem[Basu and Michailidis(2015)]%
        {basu2015regularized}
\bibfield{author}{\bibinfo{person}{Sumanta Basu} {and} \bibinfo{person}{George Michailidis}.} \bibinfo{year}{2015}\natexlab{}.
\newblock \showarticletitle{Regularized estimation in sparse high-dimensional time series models}.
\newblock \bibinfo{journal}{\emph{The Annals of Statistics}} \bibinfo{volume}{43}, \bibinfo{number}{4} (\bibinfo{year}{2015}), \bibinfo{pages}{1535--1567}.
\newblock


\bibitem[Brockwell and Davis(1991)]%
        {brockwell1991time}
\bibfield{author}{\bibinfo{person}{Peter~J Brockwell} {and} \bibinfo{person}{Richard~A Davis}.} \bibinfo{year}{1991}\natexlab{}.
\newblock \bibinfo{booktitle}{\emph{Time series: theory and methods}}.
\newblock \bibinfo{publisher}{Springer science \& business media}.
\newblock


\bibitem[Cape et~al\mbox{.}(2019)]%
        {cape2019signal}
\bibfield{author}{\bibinfo{person}{J Cape}, \bibinfo{person}{M Tang}, {and} \bibinfo{person}{CE Priebe}.} \bibinfo{year}{2019}\natexlab{}.
\newblock \showarticletitle{Signal-plus-noise matrix models: eigenvector deviations and fluctuations}.
\newblock \bibinfo{journal}{\emph{Biometrika}} \bibinfo{volume}{106}, \bibinfo{number}{1} (\bibinfo{year}{2019}), \bibinfo{pages}{243--250}.
\newblock


\bibitem[Chamberlain(1983)]%
        {chamberlain1983funds}
\bibfield{author}{\bibinfo{person}{Gary Chamberlain}.} \bibinfo{year}{1983}\natexlab{}.
\newblock \showarticletitle{Funds, factors, and diversification in arbitrage pricing models}.
\newblock \bibinfo{journal}{\emph{Econometrica: Journal of the Econometric Society}} (\bibinfo{year}{1983}), \bibinfo{pages}{1305--1323}.
\newblock


\bibitem[Chamberlain and Rothschild(1983)]%
        {chamberlain1983arbitrage}
\bibfield{author}{\bibinfo{person}{Gary Chamberlain} {and} \bibinfo{person}{Michael Rothschild}.} \bibinfo{year}{1983}\natexlab{}.
\newblock \showarticletitle{Arbitrage, Factor Structure, and Mean-Variance Analysis on Large Asset Markets}.
\newblock \bibinfo{journal}{\emph{Econometrica: Journal of the Econometric Society}} (\bibinfo{year}{1983}), \bibinfo{pages}{1281--1304}.
\newblock


\bibitem[Chen et~al\mbox{.}(2023)]%
        {chen2023community}
\bibfield{author}{\bibinfo{person}{Elynn~Y Chen}, \bibinfo{person}{Jianqing Fan}, {and} \bibinfo{person}{Xuening Zhu}.} \bibinfo{year}{2023}\natexlab{}.
\newblock \showarticletitle{Community network auto-regression for high-dimensional time series}.
\newblock \bibinfo{journal}{\emph{Journal of Econometrics}} \bibinfo{volume}{235}, \bibinfo{number}{2} (\bibinfo{year}{2023}), \bibinfo{pages}{1239--1256}.
\newblock


\bibitem[Chen et~al\mbox{.}(2021)]%
        {chen2021spectral}
\bibfield{author}{\bibinfo{person}{Yuxin Chen}, \bibinfo{person}{Yuejie Chi}, \bibinfo{person}{Jianqing Fan}, \bibinfo{person}{Cong Ma}, {et~al\mbox{.}}} \bibinfo{year}{2021}\natexlab{}.
\newblock \showarticletitle{Spectral methods for data science: A statistical perspective}.
\newblock \bibinfo{journal}{\emph{Foundations and Trends{\textregistered} in Machine Learning}} \bibinfo{volume}{14}, \bibinfo{number}{5} (\bibinfo{year}{2021}), \bibinfo{pages}{566--806}.
\newblock


\bibitem[Chernozhukov et~al\mbox{.}(2017)]%
        {chernozhukov2017lava}
\bibfield{author}{\bibinfo{person}{Victor Chernozhukov}, \bibinfo{person}{Christian Hansen}, {and} \bibinfo{person}{Yuan Liao}.} \bibinfo{year}{2017}\natexlab{}.
\newblock \showarticletitle{A lava attack on the recovery of sums of dense and sparse signals}.
\newblock \bibinfo{journal}{\emph{The Annals of Statistics}} \bibinfo{volume}{45}, \bibinfo{number}{1} (\bibinfo{year}{2017}), \bibinfo{pages}{39--76}.
\newblock


\bibitem[Donoho et~al\mbox{.}(2023)]%
        {donoho2023screenot}
\bibfield{author}{\bibinfo{person}{David Donoho}, \bibinfo{person}{Matan Gavish}, {and} \bibinfo{person}{Elad Romanov}.} \bibinfo{year}{2023}\natexlab{}.
\newblock \showarticletitle{ScreeNOT: Exact MSE-optimal singular value thresholding in correlated noise}.
\newblock \bibinfo{journal}{\emph{The Annals of Statistics}} \bibinfo{volume}{51}, \bibinfo{number}{1} (\bibinfo{year}{2023}), \bibinfo{pages}{122--148}.
\newblock


\bibitem[Fama and French(1992)]%
        {fama1992cross}
\bibfield{author}{\bibinfo{person}{Eugene~F Fama} {and} \bibinfo{person}{Kenneth~R French}.} \bibinfo{year}{1992}\natexlab{}.
\newblock \showarticletitle{The cross-section of expected stock returns}.
\newblock \bibinfo{journal}{\emph{the Journal of Finance}} \bibinfo{volume}{47}, \bibinfo{number}{2} (\bibinfo{year}{1992}), \bibinfo{pages}{427--465}.
\newblock


\bibitem[Fan et~al\mbox{.}(2020)]%
        {fan2020factor}
\bibfield{author}{\bibinfo{person}{Jianqing Fan}, \bibinfo{person}{Yuan Ke}, {and} \bibinfo{person}{Kaizheng Wang}.} \bibinfo{year}{2020}\natexlab{}.
\newblock \showarticletitle{Factor-adjusted regularized model selection}.
\newblock \bibinfo{journal}{\emph{Journal of econometrics}} \bibinfo{volume}{216}, \bibinfo{number}{1} (\bibinfo{year}{2020}), \bibinfo{pages}{71--85}.
\newblock


\bibitem[Fan et~al\mbox{.}(2023)]%
        {fan2023bridging}
\bibfield{author}{\bibinfo{person}{Jianqing Fan}, \bibinfo{person}{Ricardo~P Masini}, {and} \bibinfo{person}{Marcelo~C Medeiros}.} \bibinfo{year}{2023}\natexlab{}.
\newblock \showarticletitle{Bridging factor and sparse models}.
\newblock \bibinfo{journal}{\emph{The Annals of Statistics}} \bibinfo{volume}{51}, \bibinfo{number}{4} (\bibinfo{year}{2023}), \bibinfo{pages}{1692--1717}.
\newblock


\bibitem[Forni et~al\mbox{.}(2009)]%
        {forni2009opening}
\bibfield{author}{\bibinfo{person}{Mario Forni}, \bibinfo{person}{Domenico Giannone}, \bibinfo{person}{Marco Lippi}, {and} \bibinfo{person}{Lucrezia Reichlin}.} \bibinfo{year}{2009}\natexlab{}.
\newblock \showarticletitle{Opening the black box: Structural factor models with large cross sections}.
\newblock \bibinfo{journal}{\emph{Econometric Theory}} \bibinfo{volume}{25}, \bibinfo{number}{5} (\bibinfo{year}{2009}), \bibinfo{pages}{1319--1347}.
\newblock


\bibitem[Forni et~al\mbox{.}(2000)]%
        {forni2000generalized}
\bibfield{author}{\bibinfo{person}{Mario Forni}, \bibinfo{person}{Marc Hallin}, \bibinfo{person}{Marco Lippi}, {and} \bibinfo{person}{Lucrezia Reichlin}.} \bibinfo{year}{2000}\natexlab{}.
\newblock \showarticletitle{The generalized dynamic-factor model: Identification and estimation}.
\newblock \bibinfo{journal}{\emph{Review of Economics and statistics}} \bibinfo{volume}{82}, \bibinfo{number}{4} (\bibinfo{year}{2000}), \bibinfo{pages}{540--554}.
\newblock


\bibitem[Forni and Lippi(2001)]%
        {forni2001generalized}
\bibfield{author}{\bibinfo{person}{Mario Forni} {and} \bibinfo{person}{Marco Lippi}.} \bibinfo{year}{2001}\natexlab{}.
\newblock \showarticletitle{The generalized dynamic factor model: representation theory}.
\newblock \bibinfo{journal}{\emph{Econometric theory}} \bibinfo{volume}{17}, \bibinfo{number}{6} (\bibinfo{year}{2001}), \bibinfo{pages}{1113--1141}.
\newblock


\bibitem[Gallagher et~al\mbox{.}(2024)]%
        {gallagher2024spectral}
\bibfield{author}{\bibinfo{person}{Ian Gallagher}, \bibinfo{person}{Andrew Jones}, \bibinfo{person}{Anna Bertiger}, \bibinfo{person}{Carey~E Priebe}, {and} \bibinfo{person}{Patrick Rubin-Delanchy}.} \bibinfo{year}{2024}\natexlab{}.
\newblock \showarticletitle{Spectral embedding of weighted graphs}.
\newblock \bibinfo{journal}{\emph{J. Amer. Statist. Assoc.}} \bibinfo{volume}{119}, \bibinfo{number}{547} (\bibinfo{year}{2024}), \bibinfo{pages}{1923--1932}.
\newblock


\bibitem[Geweke(1977)]%
        {geweke1977dynamic}
\bibfield{author}{\bibinfo{person}{John Geweke}.} \bibinfo{year}{1977}\natexlab{}.
\newblock \showarticletitle{The dynamic factor analysis of economic time series}.
\newblock \bibinfo{journal}{\emph{Latent variables in socio-economic models}} (\bibinfo{year}{1977}).
\newblock


\bibitem[Giannone et~al\mbox{.}(2021)]%
        {giannone2021economic}
\bibfield{author}{\bibinfo{person}{Domenico Giannone}, \bibinfo{person}{Michele Lenza}, {and} \bibinfo{person}{Giorgio~E Primiceri}.} \bibinfo{year}{2021}\natexlab{}.
\newblock \showarticletitle{Economic predictions with big data: The illusion of sparsity}.
\newblock \bibinfo{journal}{\emph{Econometrica}} \bibinfo{volume}{89}, \bibinfo{number}{5} (\bibinfo{year}{2021}), \bibinfo{pages}{2409--2437}.
\newblock


\bibitem[Granger and Morris(1976)]%
        {granger1976time}
\bibfield{author}{\bibinfo{person}{Clive~WJ Granger} {and} \bibinfo{person}{Michael~J Morris}.} \bibinfo{year}{1976}\natexlab{}.
\newblock \showarticletitle{Time series modelling and interpretation}.
\newblock \bibinfo{journal}{\emph{Journal of the Royal Statistical Society Series A: Statistics in Society}} \bibinfo{volume}{139}, \bibinfo{number}{2} (\bibinfo{year}{1976}), \bibinfo{pages}{246--257}.
\newblock


\bibitem[Gu et~al\mbox{.}(2020)]%
        {gu2020empirical}
\bibfield{author}{\bibinfo{person}{Shihao Gu}, \bibinfo{person}{Bryan Kelly}, {and} \bibinfo{person}{Dacheng Xiu}.} \bibinfo{year}{2020}\natexlab{}.
\newblock \showarticletitle{Empirical asset pricing via machine learning}.
\newblock \bibinfo{journal}{\emph{The Review of Financial Studies}} \bibinfo{volume}{33}, \bibinfo{number}{5} (\bibinfo{year}{2020}), \bibinfo{pages}{2223--2273}.
\newblock


\bibitem[Han et~al\mbox{.}(2015)]%
        {han2015direct}
\bibfield{author}{\bibinfo{person}{Fang Han}, \bibinfo{person}{Huanran Lu}, {and} \bibinfo{person}{Han Liu}.} \bibinfo{year}{2015}\natexlab{}.
\newblock \showarticletitle{A direct estimation of high dimensional stationary vector autoregressions}.
\newblock \bibinfo{journal}{\emph{Journal of Machine Learning Research}} (\bibinfo{year}{2015}).
\newblock


\bibitem[Hoff et~al\mbox{.}(2002)]%
        {hoff2002latent}
\bibfield{author}{\bibinfo{person}{Peter~D Hoff}, \bibinfo{person}{Adrian~E Raftery}, {and} \bibinfo{person}{Mark~S Handcock}.} \bibinfo{year}{2002}\natexlab{}.
\newblock \showarticletitle{Latent space approaches to social network analysis}.
\newblock \bibinfo{journal}{\emph{J. Amer. Statist. Assoc.}} \bibinfo{volume}{97}, \bibinfo{number}{460} (\bibinfo{year}{2002}), \bibinfo{pages}{1090--1098}.
\newblock


\bibitem[Knight et~al\mbox{.}(2020)]%
        {knight2020generalized}
\bibfield{author}{\bibinfo{person}{Marina Knight}, \bibinfo{person}{Kathryn Leeming}, \bibinfo{person}{Guy Nason}, {and} \bibinfo{person}{Matthew Nunes}.} \bibinfo{year}{2020}\natexlab{}.
\newblock \showarticletitle{Generalized network autoregressive processes and the GNAR package}.
\newblock \bibinfo{journal}{\emph{Journal of Statistical Software}}  \bibinfo{volume}{96} (\bibinfo{year}{2020}), \bibinfo{pages}{1--36}.
\newblock


\bibitem[Knight et~al\mbox{.}(2016)]%
        {knight2016modelling}
\bibfield{author}{\bibinfo{person}{Marina~Iuliana Knight}, \bibinfo{person}{MA Nunes}, {and} \bibinfo{person}{GP Nason}.} \bibinfo{year}{2016}\natexlab{}.
\newblock \showarticletitle{Modelling, detrending and decorrelation of network time series}.
\newblock \bibinfo{journal}{\emph{arXiv preprint arXiv:1603.03221}} (\bibinfo{year}{2016}).
\newblock


\bibitem[Krampe and Margaritella(2025)]%
        {krampe2025factor}
\bibfield{author}{\bibinfo{person}{Jonas Krampe} {and} \bibinfo{person}{Luca Margaritella}.} \bibinfo{year}{2025}\natexlab{}.
\newblock \showarticletitle{Factor Models With Sparse Vector Autoregressive Idiosyncratic Components}.
\newblock \bibinfo{journal}{\emph{Oxford Bulletin of Economics and Statistics}} (\bibinfo{year}{2025}).
\newblock


\bibitem[Laloux et~al\mbox{.}(2000)]%
        {laloux2000random}
\bibfield{author}{\bibinfo{person}{Laurent Laloux}, \bibinfo{person}{Pierre Cizeau}, \bibinfo{person}{Marc Potters}, {and} \bibinfo{person}{Jean-Philippe Bouchaud}.} \bibinfo{year}{2000}\natexlab{}.
\newblock \showarticletitle{Random matrix theory and financial correlations}.
\newblock \bibinfo{journal}{\emph{International Journal of Theoretical and Applied Finance}} \bibinfo{volume}{3}, \bibinfo{number}{03} (\bibinfo{year}{2000}), \bibinfo{pages}{391--397}.
\newblock


\bibitem[Lee and Wilkinson(2019)]%
        {lee2019review}
\bibfield{author}{\bibinfo{person}{Clement Lee} {and} \bibinfo{person}{Darren~J Wilkinson}.} \bibinfo{year}{2019}\natexlab{}.
\newblock \showarticletitle{A review of stochastic block models and extensions for graph clustering}.
\newblock \bibinfo{journal}{\emph{Applied Network Science}} \bibinfo{volume}{4}, \bibinfo{number}{1} (\bibinfo{year}{2019}), \bibinfo{pages}{1--50}.
\newblock


\bibitem[Lei and Rinaldo(2015)]%
        {lei2015consistency}
\bibfield{author}{\bibinfo{person}{Jing Lei} {and} \bibinfo{person}{Alessandro Rinaldo}.} \bibinfo{year}{2015}\natexlab{}.
\newblock \showarticletitle{Consistency of spectral clustering in stochastic block models}.
\newblock \bibinfo{journal}{\emph{The Annals of Statistics}} \bibinfo{volume}{43}, \bibinfo{number}{1} (\bibinfo{year}{2015}), \bibinfo{pages}{215--237}.
\newblock


\bibitem[L{\"u}tkepohl(1984)]%
        {lutkepohl1984linear}
\bibfield{author}{\bibinfo{person}{Helmut L{\"u}tkepohl}.} \bibinfo{year}{1984}\natexlab{}.
\newblock \showarticletitle{Linear transformations of vector ARMA processes}.
\newblock \bibinfo{journal}{\emph{Journal of Econometrics}} \bibinfo{volume}{26}, \bibinfo{number}{3} (\bibinfo{year}{1984}), \bibinfo{pages}{283--293}.
\newblock


\bibitem[L{\"u}tkepohl(2005)]%
        {lutkepohl2005new}
\bibfield{author}{\bibinfo{person}{Helmut L{\"u}tkepohl}.} \bibinfo{year}{2005}\natexlab{}.
\newblock \bibinfo{booktitle}{\emph{New introduction to multiple time series analysis}}.
\newblock \bibinfo{publisher}{Springer Science \& Business Media}.
\newblock


\bibitem[Mar{\v{c}}enko and Pastur(1967)]%
        {marvcenko1967distribution}
\bibfield{author}{\bibinfo{person}{Vladimir~A Mar{\v{c}}enko} {and} \bibinfo{person}{Leonid~Andreevich Pastur}.} \bibinfo{year}{1967}\natexlab{}.
\newblock \showarticletitle{Distribution of eigenvalues for some sets of random matrices}.
\newblock \bibinfo{journal}{\emph{Mathematics of the USSR-Sbornik}} \bibinfo{volume}{1}, \bibinfo{number}{4} (\bibinfo{year}{1967}), \bibinfo{pages}{457}.
\newblock


\bibitem[Martin et~al\mbox{.}(2024)]%
        {martin2024nirvar}
\bibfield{author}{\bibinfo{person}{Brendan Martin}, \bibinfo{person}{Francesco Sanna~Passino}, \bibinfo{person}{Mihai Cucuringu}, {and} \bibinfo{person}{Alessandra Luati}.} \bibinfo{year}{2024}\natexlab{}.
\newblock \showarticletitle{NIRVAR: Network Informed Restricted Vector Autoregression}.
\newblock \bibinfo{journal}{\emph{arXiv preprint arXiv:2407.13314}} (\bibinfo{year}{2024}).
\newblock


\bibitem[McCracken and Ng(2016)]%
        {mccracken2016fred}
\bibfield{author}{\bibinfo{person}{Michael~W McCracken} {and} \bibinfo{person}{Serena Ng}.} \bibinfo{year}{2016}\natexlab{}.
\newblock \showarticletitle{FRED-MD: A monthly database for macroeconomic research}.
\newblock \bibinfo{journal}{\emph{Journal of Business \& Economic Statistics}} \bibinfo{volume}{34}, \bibinfo{number}{4} (\bibinfo{year}{2016}), \bibinfo{pages}{574--589}.
\newblock


\bibitem[Miao et~al\mbox{.}(2023)]%
        {miao2023high}
\bibfield{author}{\bibinfo{person}{Ke Miao}, \bibinfo{person}{Peter~CB Phillips}, {and} \bibinfo{person}{Liangjun Su}.} \bibinfo{year}{2023}\natexlab{}.
\newblock \showarticletitle{High-dimensional VARs with common factors}.
\newblock \bibinfo{journal}{\emph{Journal of Econometrics}} \bibinfo{volume}{233}, \bibinfo{number}{1} (\bibinfo{year}{2023}), \bibinfo{pages}{155--183}.
\newblock


\bibitem[Newman(2003)]%
        {newman2003mixing}
\bibfield{author}{\bibinfo{person}{Mark~EJ Newman}.} \bibinfo{year}{2003}\natexlab{}.
\newblock \showarticletitle{Mixing patterns in networks}.
\newblock \bibinfo{journal}{\emph{Physical review E}} \bibinfo{volume}{67}, \bibinfo{number}{2} (\bibinfo{year}{2003}), \bibinfo{pages}{026126}.
\newblock


\bibitem[Nicholson et~al\mbox{.}(2020)]%
        {nicholson2020high}
\bibfield{author}{\bibinfo{person}{William~B Nicholson}, \bibinfo{person}{Ines Wilms}, \bibinfo{person}{Jacob Bien}, {and} \bibinfo{person}{David~S Matteson}.} \bibinfo{year}{2020}\natexlab{}.
\newblock \showarticletitle{High dimensional forecasting via interpretable vector autoregression}.
\newblock \bibinfo{journal}{\emph{Journal of Machine Learning Research}} \bibinfo{volume}{21}, \bibinfo{number}{166} (\bibinfo{year}{2020}), \bibinfo{pages}{1--52}.
\newblock


\bibitem[Sargent et~al\mbox{.}(1977)]%
        {sargent1977business}
\bibfield{author}{\bibinfo{person}{Thomas~J Sargent}, \bibinfo{person}{Christopher~A Sims}, {et~al\mbox{.}}} \bibinfo{year}{1977}\natexlab{}.
\newblock \showarticletitle{Business cycle modeling without pretending to have too much a priori economic theory}.
\newblock \bibinfo{journal}{\emph{New methods in business cycle research}}  \bibinfo{volume}{1} (\bibinfo{year}{1977}), \bibinfo{pages}{145--168}.
\newblock


\bibitem[Stock and Watson(2002a)]%
        {stock2002forecasting}
\bibfield{author}{\bibinfo{person}{James~H Stock} {and} \bibinfo{person}{Mark~W Watson}.} \bibinfo{year}{2002}\natexlab{a}.
\newblock \showarticletitle{Forecasting using principal components from a large number of predictors}.
\newblock \bibinfo{journal}{\emph{J. Amer. Statist. Assoc.}} \bibinfo{volume}{97}, \bibinfo{number}{460} (\bibinfo{year}{2002}), \bibinfo{pages}{1167--1179}.
\newblock


\bibitem[Stock and Watson(2002b)]%
        {stock2002macroeconomic}
\bibfield{author}{\bibinfo{person}{James~H Stock} {and} \bibinfo{person}{Mark~W Watson}.} \bibinfo{year}{2002}\natexlab{b}.
\newblock \showarticletitle{Macroeconomic forecasting using diffusion indexes}.
\newblock \bibinfo{journal}{\emph{Journal of Business \& Economic Statistics}} \bibinfo{volume}{20}, \bibinfo{number}{2} (\bibinfo{year}{2002}), \bibinfo{pages}{147--162}.
\newblock


\bibitem[Tibshirani(1996)]%
        {tibshirani1996regression}
\bibfield{author}{\bibinfo{person}{Robert Tibshirani}.} \bibinfo{year}{1996}\natexlab{}.
\newblock \showarticletitle{Regression shrinkage and selection via the lasso}.
\newblock \bibinfo{journal}{\emph{Journal of the Royal Statistical Society Series B: Statistical Methodology}} \bibinfo{volume}{58}, \bibinfo{number}{1} (\bibinfo{year}{1996}), \bibinfo{pages}{267--288}.
\newblock


\bibitem[Tsay(2013)]%
        {tsay2013multivariate}
\bibfield{author}{\bibinfo{person}{Ruey~S Tsay}.} \bibinfo{year}{2013}\natexlab{}.
\newblock \bibinfo{booktitle}{\emph{Multivariate time series analysis: with R and financial applications}}.
\newblock \bibinfo{publisher}{John Wiley \& Sons}.
\newblock


\bibitem[Vidyamurthy(2004)]%
        {vidyamurthy2004pairs}
\bibfield{author}{\bibinfo{person}{Ganapathy Vidyamurthy}.} \bibinfo{year}{2004}\natexlab{}.
\newblock \bibinfo{booktitle}{\emph{Pairs Trading: quantitative methods and analysis}}.
\newblock \bibinfo{publisher}{John Wiley \& Sons}.
\newblock


\bibitem[Yin et~al\mbox{.}(2023)]%
        {yin2023general}
\bibfield{author}{\bibinfo{person}{Hang Yin}, \bibinfo{person}{Abolfazl Safikhani}, {and} \bibinfo{person}{George Michailidis}.} \bibinfo{year}{2023}\natexlab{}.
\newblock \showarticletitle{A general modeling framework for network autoregressive processes}.
\newblock \bibinfo{journal}{\emph{Technometrics}} \bibinfo{volume}{65}, \bibinfo{number}{4} (\bibinfo{year}{2023}), \bibinfo{pages}{579--589}.
\newblock


\bibitem[Zhu and Ghodsi(2006)]%
        {zhu2006automatic}
\bibfield{author}{\bibinfo{person}{Mu Zhu} {and} \bibinfo{person}{Ali Ghodsi}.} \bibinfo{year}{2006}\natexlab{}.
\newblock \showarticletitle{Automatic dimensionality selection from the scree plot via the use of profile likelihood}.
\newblock \bibinfo{journal}{\emph{Computational Statistics \& Data Analysis}} \bibinfo{volume}{51}, \bibinfo{number}{2} (\bibinfo{year}{2006}), \bibinfo{pages}{918--930}.
\newblock


\bibitem[Zhu et~al\mbox{.}(2017)]%
        {zhu2017network}
\bibfield{author}{\bibinfo{person}{X Zhu}, \bibinfo{person}{R Pan}, \bibinfo{person}{G Li}, \bibinfo{person}{Y Liu}, {and} \bibinfo{person}{H Wang}.} \bibinfo{year}{2017}\natexlab{}.
\newblock \showarticletitle{Network vector autoregression}.
\newblock \bibinfo{journal}{\emph{The Annals of Statistics}} \bibinfo{volume}{45}, \bibinfo{number}{3} (\bibinfo{year}{2017}), \bibinfo{pages}{1096--1123}.
\newblock


\end{thebibliography}


\end{document}